\newcommand{\START}{{\scriptsize{START}}\xspace}
\newcommand{\DONE}{{\scriptsize{DONE}}\xspace}
\title{A Wait-Free Stack}
\author{Seep Goel, Pooja Aggarwal and Smruti R. Sarangi \\
E-mail: seep.goyal@gmail.com, \{pooja.aggarwal, srsarangi\}@cse.iitd.ac.in
\institute{Indian Institute of Technology, New Delhi, India} }
\begin{document}
\maketitle
\begin{abstract}
In this paper, we describe a novel algorithm to create a concurrent wait-free stack. To the best
of our knowledge, this is the first wait-free algorithm for a general purpose stack. In the past,
researchers have proposed restricted wait-free implementations of stacks, lock-free implementations,
and efficient universal constructions that can support wait-free stacks. The crux of our
wait-free implementation is a fast $pop$ operation that does not modify the stack top; instead,
it walks down the stack till it finds a node that is unmarked. It marks it but does not delete it.
Subsequently, it is lazily deleted by a $cleanup$ operation. 
This operation keeps the size of the stack in check by not allowing the size of the stack to 
increase beyond a factor of $W$ as compared to 
the actual size. All our operations are wait-free and linearizable.
\end{abstract}
\section{Introduction}
\label{sec:intro}
In this paper, we describe an algorithm to create a wait-free stack. A concurrent data structure is said to be
wait-free if each operation is guaranteed to complete within a finite number of steps. In comparison, the data
structure is said to be lock-free if at any point of time, at least one operation is guaranteed to complete
in a finite number of steps. Lock-free programs will not have deadlocks but can have starvation, whereas
wait-free programs are starvation free. Wait-free stacks have not received
a lot of attention in the past, and we are not aware of algorithms that are particularly tailored to creating
a generalized wait-free stack. However, approaches have been proposed to create wait-free stacks with certain
restrictions~\cite{arrayBased}, ~\cite{restrictedStack}, ~\cite{WaitFreeSharedCounter1},~\cite{WaitFreeSharedCounter2}, 
and with universal constructions~\cite{oldUniversal}, ~\cite{newUniversal}. The main reason that it has been 
difficult to create a wait-free stack is because there
is a lot of contention at the stack top between concurrent $push$ and $pop$ operations. It has thus been
hitherto difficult to realize the gains of additional parallelism, and also guarantee completion in a finite
amount of time.

The crux of our algorithm is as follows. We implement a stack as a linked list, where the $top$ pointer
points to the stack top. Each $push$ operation adds an element to the linked list, and updates the $top$ pointer.
Both of these steps are done atomically, and the overall operation is linearizable (appears
to execute instantaneously). However, the $pop$ 
operation does not update the $top$ pointer. This design decision has been made to enable more parallelism,
and reduce the time per operation. It instead scans the list starting from the $top$ pointer till
it reaches an unmarked node. Once, it reaches an unmarked node, it marks it and returns the node as the result
of the $pop$ operation. Over time, more and more nodes get marked in the stack. To garbage collect such
nodes we implement a $cleanup$ operation that can be invoked by both the $push$ and $pop$ operations.
The cleanup operation removes a sequence of $W$ consecutively marked nodes from the list. In our algorithm,
we guarantee that at no point of time the size of the list is more than $W$ times the size of the stack
(number of pushes - pops). This property ensures that $pop$ operations complete within a finite amount of
time. Here, $W$ is a user defined parameter and it needs to be set to an optimal value to ensure the best
possible performance. 

The novel feature of our algorithm is the $cleanup$ operation that always keeps the size of the stack
within limits. It does not allow the number of marked nodes that have already been popped to indefinitely
grow. The other novel feature is that concurrent $pop$ and $push$ operations do not cross each others' paths. 
Moreover, all the $pop$ operations can take place concurrently. This allows us to have a linearizable operation. 
In this paper, we present our basic algorithm along with proofs of important results. Readers can find
the rest of the pseudo code, asymptotic time complexities, and proofs in the appendices. 

\section{Related Work}
\label{sec:rel}
\vspace{-2mm}
In 1986, Treiber~\cite{CopingWithPara} proposed the first lock-free
implementation of a concurrent stack.  He employed a linked list based data
structure, and in his implementation, both the $push$ and $pop$
operations modified the $top$ pointer using CAS instructions.  
Subsequently, Shavit et
al.~\cite{combiningFunnels} and Hendler et al.~\cite{flatCombining} designed a
linearizable concurrent stack using the concept of software combining. Here,
they group concurrent operations, and operate on the entire group.

In 2004, Hendler et al.~\cite{LockFree} proposed a highly scalable lock-free
stack using  an array of lock-free exchangers known as an elimination array.
If a $pop$ operation is paired with a $push$ operation, then the baseline data structure need
not be accessed. This greatly enhances the amount of available parallelism, and
is known to be one of the most efficient implementations of a lock-free stack. 
This technique can be incorporated in our design as well.
Subsequently, Bar-Nissan et al.~\cite{DECS} have augmented this proposal with
software combining based approaches. 
Recently, Dodds et al.~\cite{timestamp} proposed a fast lock-free stack, which uses a
timestamp for ordering the $push$ and $pop$ operations.

The restricted wait-free algorithms for the stack data structure proposed so far by the researchers are summarized
in Table ~\ref{tab:rel}.

\begin{table*}[!htb]
\scriptsize
\begin{center}

\begin{tabular}{|l|l|l| }
\hline
                                                         
{\em Author}			&{\em Primitives}	&{\em Remarks}\\\hline

\hline  
\multirow{2}{*}{Herlihy 1991~\cite{oldUniversal}}		&\multirow{2}{*}{CAS}	&1. Copies every global update to the
private copy of every thread.
\\
	&							&2. Replicates the stack data structure $N$ times ($N \rightarrow$ \# threads).
\\\hline

Afek et al.~\cite{arrayBased}  	&F\&A,	 		&1. Requires a semi-infinite array (impractical).
\\  
2006]				&TAS			&2. Unbounded stack size.
\\\hline 

Hendler et al.~\cite{WaitFreeSharedCounter1}	&\multirow{2}{*}{DCAS}	&1. DCAS not supported in modern hardware 
\\  
2006]						&			& 2. Variation of an implementation of a shared counter. 
\\\hline

Fatourou et al.~\cite{newUniversal}	&LL/SC,	&1. Copies every global update to the private copy of every thread. \\
2011]			&F\&A	&2. Relies on wait-free implementation of F\&A in hardware.
\\\hline 
David et al. 2011 ~\cite{restrictedStack}		&BH Object		&1. Supports at the most two concurrent pop operations
\\\hline      
\hline
\multicolumn{3}{||c||}{CAS $\rightarrow$ compare-and-set, TAS $\rightarrow$ test-and-set, LL/SC $\rightarrow$
load linked-store conditional} \\
\multicolumn{3}{||c||}{DCAS $\rightarrow$ double location CAS, F\&A $\rightarrow$ fetch-and-add, BH Object (custom
object~\cite{restrictedStack})} \\
\hline
\end{tabular}
\caption{ Summary of existing restricted wait-free stack algorithms  \label{tab:rel}}
\end{center}

\vskip -3mm
\end{table*}

The $wait$-$free$ stack proposed in~\cite{arrayBased} employs a semi-infinite
array as its underlying data structure. A $push$ operation obtains a unique index in the
array (using getAndIncrement()) and writes its value to that index. A $pop$
operation starts from the top of the stack, and traverses the stack towards the bottom.
It marks and returns the first unmarked node that we find. Our $pop$ operation
is inspired by this algorithm. Due to its unrestricted stack size, this algorithm
is not practical.

David et al.~\cite{restrictedStack} proposed another class of
 restricted stack implementations.
Their implementation can support a maximum of two concurrent $push$ operations.
Kutten et al.~\cite{WaitFreeSharedCounter1,WaitFreeSharedCounter2} suggest an
approach where a wait-free shared counter can be adapted to
create wait-free stacks. However, their algorithm requires the DCAS (double
CAS) primitive, which is not supported in contemporary hardware. 

Wait-free universal constructions are generic algorithms that can be used to create linearizable implementations of any
object that has valid sequential semantics.  The inherent drawback of these approaches is that they typically have
high time and space overheads (creates local copies of the entire (or partial) data structure).
  A recent proposal by Fatourou et al.~\cite{newUniversal} can be used to implement stacks
and queues.  The approach derives its performance improvement over the widely accepted universal construction of
Herlihy~\cite{oldUniversal} by optimizing on the number of shared memory accesses. 
\section{The Algorithm}

\subsection{Basic Data Structures}
Algorithm~\ref{alg:node} shows the $Node$ class, which represents a node in a stack. 
It has a $value$, and pointers to the next ($nextDone$) and previous nodes ($prev$) respectively.
Note that our stack is not a doubly linked list, the next pointer $nextDone$ is only used for reaching consensus on which node will be added next in the stack. 

 To support $pop$ operations, every node has a $mark$ field. The $pushTid$ field contains 
the id of the thread that created the request. The $index$ field and $counter$ is an atomic integer and is used 
to clean up the stack.
\vspace{-6mm}
\begin{algorithm}
\scriptsize
\SetAlgoLined
\textbf{class Node}{}\\
		\hspace{5mm}$int$ $value$\\
		\hspace{5mm}$AtomicMarkableReference<Node>$ $nextDone$\\
		\hspace{5mm}$AtomicReference<Node>$ $prev$\\
		\hspace{5mm}$AtomicBoolean$ $mark$\\
		\hspace{5mm}$int$ $pushTid$\\
		\hspace{5mm}$AtomicInteger$ $index$\\	
		\hspace{5mm}$AtomicInteger$ $counter$  /* initially set to 0 */\\ 	
\caption{The Node Class} \label{alg:node}
\end{algorithm}
\vspace{-6mm}

In our wait-free stack, the nodes are arranged as a linked list. Initially, the list contains 
only the $sentinel$ node, which is a dummy node. As we push elements, the list starts to grow. 
The $top$ pointer points to the current stack top. 

\normalsize
\vspace{-4mm}
\subsection{High level Overview}

The $push$ operation starts by choosing a phase number (in a monotonically increasing manner), which is greater than
the phase numbers of all the existing push operations in the system. This phase
number along with a reference to the node to be pushed and a flag indicating
the status of the $push$ operation are saved in the $announce$ array in an
atomic step. After this, the thread $t_i$ scans the $announce$ array and finds
out the thread $t_j$, which has a $push$ request with the least phase
number. Note that, the thread $t_j$ found out by $t_i$ in the last step might
be $t_i$ itself. Next, $t_i$ helps $t_j$ in completing $t_j$'s operation. At
this point of time, some threads other than $t_i$ might also be trying to help
$t_j$, and therefore, we must ensure that $t_j$'s operation is applied exactly
once. This is ensured by mandating that for the completion of any $push$
request, the following steps must be performed in the exact specified order:

\begin{enumerate} 
\item Modify the state of the stack in such a manner
that all the other $push$ requests in the system must come to know that a
$push$ request $p_i$ is in progress and additionally they should be able to figure out
the details required by them to help $p_i$.  
\item Update the status flag to
$DONE$ in $p_i$'s entry in the $announce$ array.  
\item Update the $top$
pointer to point to the newly pushed node.  
\end{enumerate}


The $pop$ operation has been designed in such a manner that it does not update
the $top$ pointer. This decision has the dual benefit of eliminating the
contention between concurrent $push$ and $pop$ operations, as well as enabling the
parallel execution of multiple $pop$ operations. The $pop$ operation starts by
scanning the linked list starting from the stack's top till it reaches an
unmarked node. Once, it gets an unmarked node, it marks it and returns the node
as a result of the $pop$ operation.  Note that there is no helping in the case
of a $pop$ operation and therefore, we do not need to worry about a $pop$
operation being executed twice. Over time, more and more nodes get marked in
the stack. To garbage collect such nodes we implement a $clean$ operation
that can be invoked by both the $push$ and $pop$ operations. 

\vspace{-4mm}

\subsection{The Push Operation}
The first step in pushing a node is to create an instance of the $PushOp$ class. It contains the 
reference to a Node ($node$), a Boolean variable $pushed$ that indicates the status of the 
request, and a phase number ($phase$) to indicate the age of the request.
Let us now consider the $push$ method (Line~\ref{line:push}).
 We first get the phase number by atomically incrementing a global counter. 
Once the $PushOp$ is created and its phase is initialized, it is saved in the $announce$ array. 
Subsequently, we call the function $help$ to actually execute the $push$ request. 

The $help$ function (Line~\ref{line:help}) finds the request with the least phase number that has 
not been pushed yet. If there is no such request, then it returns. Otherwise it helps that request 
($minReq$) to complete by calling the $attachNode$ method. 
After helping $minReq$, we check if the request that was helped is the same as the request that was 
passed as an argument to the $help$ function ($request$) in Line~\ref{line:help}. If they are 
different requests, then we call $attachNode$ for the request $request$ in Line~\ref{line:attachreq}. 
This is a standard construction to make a lock-free method wait-free (refer to ~\cite{artOfMulti}).

In the $attachNode$ function, we first read the value of the $top$ pointer, and its $next$ field. 
If these fields have not changed between Lines~\ref{read1} and~\ref{line:notlast}, then we try to find the status of the request in Line~\ref{line:ifdone}. 
Note that we check that $next$ is equal to null, and $mark$ is equal to false in the previous line 
(Line~\ref{line:nextnull}). The $mark$ field is made true after the $top$ pointer has been updated. Hence, in 
Line~\ref{line:nextnull}, if we find it to be true then we need to abort the current iteration and read the 
$top$ pointer again.
\vspace{-6mm}
\begin{algorithm}[!htb]
\scriptsize
\SetAlgoLined

\textbf{class PushOp}{}\\
		\hspace{5mm}$long$ $phase$\\
		\hspace{5mm}$boolean$ $pushed$\\
		\hspace{5mm}$Node$ $node$\\
$AtomicReferenceArray<PushOp>$ $announce$\\
\textbf{push}($tid$, $value$){} \label{line:push}\\
		$phase$ $\leftarrow$ $globalPhase.getAndIncrement()$\\
		$request$ $\leftarrow$ new $PushOp$($phase$,$false$,$new$ $Node$($value$,$tid$)) \\
		announce[tid] $\leftarrow$ $request$\\
		$help$($request$)\\

\caption{The Push Method} \label{alg:PushOp}


\SetAlgoLined
\textbf{help}($request$){} \label{line:help}\\
		($minTid$, $minReq$)  $\leftarrow$ $min_{req.phase}$ \{ ($i$, $req$) $\mid$ $0 \le i < N$, $req = announce[i]$,
!req.pushed  \} \\
		\If{ (minReq $==$ null) $\mid\mid$ ($minReq.phase > request.phase$)} {
			$return$ 
		}
		attachNode($minReq$) \\
		\If{$minReq$ $\ne$ $request$}{
			attachNode($request$) \label{line:attachreq}
		}


\scriptsize
\SetAlgoLined	
\textbf{attachNode}($request$){}\\
		\While{!$request.pushed$}
		{
			$last$ $\leftarrow$ $top.get$()\\
			($next$, $done$) $\leftarrow$ $last.nextDone.get$()\\ \label{read1}
			\If{$last$ == $top.get$() \label{line:notlast}} 
			{
				\If{$next$ == $null$ \&\& $done$ = \textbf{false} \label{line:nextnull}}
				{
					\If{!$request.pushed$  	\label{line:ifdone}}  
					{
						$myNode$ $\leftarrow$ $request.node$\\
						$res$ $\leftarrow$ $last.nextDone.compareAndSet$( ($null$,\textbf{false}), ($myNode$, \textbf{false})) \label{line:pushcas} \\
						
						\If{res}
						 {
							
							$updateTop$()\\
							last.nextDone.compareAndSet ( ($myNode$, \textbf{false}), (null,\textbf{true})) \\
							$return$
						}
					}
				}
					$updateTop$() \label{line:otherupdatetop}
			}
		}

\end{algorithm}
\vspace{-8mm}

After, we read the status of the request, and find that it has not completed, we proceed to update the $next$ 
field of the stack top in Line~\ref{line:pushcas} using a compare-And-Set (CAS) instruction. The aim is to change the pointer in 
the $next$ field from $null$ to the node the push request needs to add. 
If we are successful, then we need to update the $top$ pointer by calling the 
function, $updateTop$. After the $top$ pointer has been updated, we do not really need the $next$ field for 
subsequent $push$ requests. It will not be used. However, concurrent requests need to see that $last.nextDone$ 
has been updated. The additional compulsion to delete the contents of the pointer in the $next$ field is that 
it is possible to have references to deleted nodes via the $next$ field. The garbage collector in this case will 
not be able to remove the deleted nodes. Thus, after updating the top pointer, we set the $next$ field's 
pointer to $null$, and set the $mark$ to true. If a  concurrent request reads the mark to be true, then it can be 
sure, that the $top$ pointer has been updated, and it needs to read it again.

If the CAS instruction fails, then it means that another concurrent request has successfully performed a CAS 
operation. However, it might not have updated the $top$ pointer. It is thus necessary to call the $updateTop$ 
function to help the request complete. 

\vspace{-8mm}
\begin{algorithm}
\scriptsize
\SetAlgoLined	
\textbf{updateTop}(){}\\
		$last$ $\leftarrow$ $top.get$()\\
		($next$, $mark$) $\leftarrow$ $last.nextDone.get$()\\
		\If{ $next \ne null$  \label{line:nextnonnull} }
		{
	    	$request$ $\leftarrow$ $announce.get$($next$.$pushTid$) \\
			\If{$last==top.get$() \&\& $request.node==next$}
			{
				/* Add the request to the stack and update the top pointer */ \\
				$next.prev.compareAndSet$($null$, $last$) \label{line:push_prev} \\
				$next.index$ $\leftarrow$ $last.index$ +1 \label{line:index}\\
				$request.pushed$ $\leftarrow$ \textbf{true} \label{line:announce} \\
				$stat$ $\leftarrow$ top.compareAndSet($last$, $next$) \label{line:updatetop} \\
				/* Check if any cleaning up has to be done */ \\
				\If{next.index \% $W$ == 0 \&\& $stat$ == \textbf{true} \label{line:onethread}} 
				{
					tryCleanUp(next) \label{line:trycleanup}
				}
			}
		}
\caption{The $updateTop$ method} \label{alg:updatetop} 
\end{algorithm}

\vspace{-8mm}
\normalsize
The $updateTop$ method is shown in Algorithm~\ref{alg:updatetop}. We read the $top$ pointer, and the $next$ 
pointer. If $next$ is non-null, then the request has not fully completed. It is necessary to help it complete. 
After having checked the value of the $top$ pointer, and the value of the $next$ field, we proceed to connect 
the newly attached node to the stack by updating its $prev$ pointer. We set the value of its $prev$ pointer in 
Line~\ref{line:push_prev}. Every node in the stack has an index that is assigned in a monotonically increasing 
order. Hence, in Line~\ref{line:index}, we set the index of $next$ to 1 plus the index of $last$.
Next, we set the $pushed$ field 
of the $request$ equal to true. The point of
linearizability is Line~\ref{line:updatetop}, 
where we update the $top$ pointer to point to $next$ instead of
$last$. This completes the $push$ operation. 

We have a cleanup mechanism that is invoked once the index of a node becomes a multiple of a constant, $W$. 
We invoke the $tryCleanUp$ method in Line~\ref{line:trycleanup}. It is necessary that the $tryCleanUp()$ method
be called by only one thread.  Hence, the thread that successfully performed a CAS on the top pointer
calls the $tryCleanUp$ method if the index is a multiple of $W$. 

\vspace{-3mm} 
\subsection{The Pop Operation}
Algorithm~\ref{alg:pop} shows the code for the $pop$ method. We read the value of 
the $top$ pointer and save it in the local variable, $myTop$. This is the only instance
in this function, where we read the value of the $top$ pointer. Then, we walk back
towards the sentinel node by following the $prev$ pointers (Lines~\ref{line:popwhile} --
\ref{line:popwhileend}). We stop when we are successfully able to set the mark of a node
that is unmarked. This node is logically ``popped'' at this instant of time. 
If we are not able to find any such node, and we reach the sentinel node, then we throw
an $EmptyStackException$. 

\vspace{-6mm}
\begin{algorithm}[!htb]
\scriptsize
\SetAlgoLined

\textbf{pop}(){}\\
	$mytop$  $\leftarrow$  $top.get$()\\ \label{alg:prev_top}
	$curr$  $\leftarrow$  $mytop$\\
	\label{alg:while_mark}
	\While{ $curr \ne sentinel$ }  {  \label{line:popwhile} 
			$mark$  $\leftarrow$ $curr.mark.getAndSet$($true$)\\  
			\If{$!mark$}
			{
				$break$
			}
			$curr$  $\leftarrow$ $curr.prev$
	} \label{line:popwhileend}

	\If{$curr == sentinel$}
	{
		/* Reached the end of the stack */ \\
			$throw$ $new$ $EmptyStackException$() 
	}
	/* Try to clean up parts of the stack */ \\
	tryCleanUp(curr) \label{line:popclean}

	return $curr$
\caption{The Pop Method} \label{alg:pop}
\end{algorithm}
\vspace{-6mm}

After logically marking a node as popped, it is time to physically delete it. We thus
call the $tryCleanUp$ method in Line~\ref{line:popclean}. The $pop$ method returns the
node that it had successfully marked.
\vspace{-3mm}
\subsection{The CleanUp Operation}

The aim of the $clean$ method is to clean a set of $W$ contiguous entries in the list (indexed by 
the $prev$ pointers). Let us start by defining some terminology. Let us define a range of $W$ 
contiguous entries, which has four distinguished nodes as shown in Figure~\ref{fig:rangew}.

A range starts with a node termed the $base$, whose index is a multiple of $W$. Let us now define
$target$ as $base.prev$. The node at the end of a range is $leftNode$. Its index is equal to 
$base.index + W - 1$. Let us now define a node $rightNode$ such that $rightNode.prev = leftNode$. 
Note that for a given range, the $base$ and $leftNode$ nodes are fixed, whereas the $target$ and
$rightNode$ nodes keep changing. $rightNode$ is the base of another range, and its index is
a multiple of $W$. 

The $push$ and the $pop$ methods call the function $tryCleanUp$. The $push$ method calls it when
it pushes a node whose index is a multiple of $W$. This is a valid $rightNode$. It walks back 
and increments the counter of the $base$ node of the previous range. We ensure that only one thread
(out of all the helpers) does this in Line~\ref{line:onethread}. Similarly, in the $pop$ function,
whenever we mark a node, we call the $tryCleanUp$ function. Since the $pop$ function does not
have any helpers, only one thread per node calls the $tryCleanUp$ function. Now, inside
the $tryCleanUp$ function, we increment the counter of the $base$ node. Once, a thread increments it to 
$W+1$, it invokes the $clean$ function. Since only one thread will increment the counter to $W+1$, 
only one thread will invoke the $clean$ function for a range. 
\vspace{-4mm}
\begin{figure}[!htb]
\begin{center}
\includegraphics[width=0.75\columnwidth]{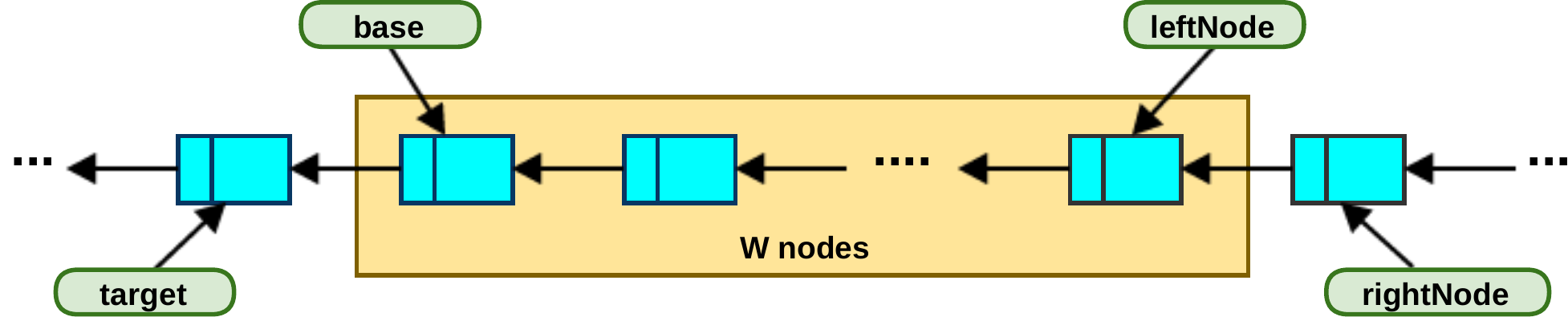}
\caption{A range of $W$ entries \label{fig:rangew} }
\end{center}
\end{figure}
\vspace{-8mm}

\vspace{-7mm}
\begin{algorithm}
\scriptsize
\SetAlgoLined
\textbf{tryCleanUp(myNode)} \\
	$temp$ $\leftarrow$ $myNode.prev$\\
	\While{$temp$ $\ne$ $sentinel$}
	{
		\If{$temp.index$() $\%$ W == 0}
		{
			\If{$temp.counter.incrementAndGet$ == W + 1}
			{
				$clean$(getTid(), $temp$) \label{line:callclean}
			}
			$break$\\
		}
		$temp$ $\leftarrow$ $temp.prev()$\\
	}
\caption{The $tryCleanUp$ method} \label{alg:tryClean}
\end{algorithm}
\vspace{-7mm}
The functionality of the $clean$ function is very similar to the $push$ function. 
Here, we first create a $DeleteRequest$ that has four fields: $phase$ (similar to phase in $PushOp$), 
$threadId$, $pending$ (whether the delete has been finished or not), and  the value of the $base$ node. 
Akin to the $push$ function, we add the newly created $DeleteRequest$ to a global array of $DeleteRequest$s.
Subsequently, we find the pending request with the minimum phase in the array $allDeleteRequests$.

Note that at this stage it is possible for multiple threads to read the same value of the request with the 
minimum phase number. It is also possible for different sets of threads to have found different requests
to have the minimum phase. For example, if a request with phase 2 ($R_2$) got added to the array before the 
request with phase 1 ($R_1$), then a set of threads might be trying to complete $R_2$, and another set might 
be trying to complete $R_1$. To ensure that our stack remains in a consistent state, we want that only one 
set goes through to the next stage.

To achieve this, we adopt a strategy similar to the one adopted in the function $attachNode$. 
Interested readers can refer to the appendices for a detailed explanation of how this is done.
Beyond this point, all the threads will be working on the same $DeleteRequest$ which we term as $uniqueRequest$. 
They will then move on to call the $helpFinishDelete$ function that will actually finish the delete request.

Let us describe the $helpFinishDelete$ function in Algorithm~\ref{alg:helpfinishdelete}. 
We first read the current request from the atomic variable, $uniqueRequest$ in Line~\ref{line:currread}. If
the request is not pending, then some other helper has completed the request, and we can return from the function.
However, if this is not the case, then we need to complete the delete operation. Our aim now is to find the $target$,
$leftNode$, and $rightNode$. We search for these nodes starting from the stack top.

The index of the $leftNode$ is equal to the index of the node in the current request ($currRequest$) + $W-1$. 
$endIdx$ is set to this value in Line~\ref{line:endidx}.
Subsequently, in Lines~\ref{line:right}--\ref{line:sentinel}, we start from the top of the stack, and keep traversing the $prev$
pointers till the index of $leftNode$ is equal to $endIdx$. Once, the equality condition is satisfied,
Lines~\ref{line:right} and \ref{line:left} give us the pointers to the $rightNode$ and $leftNode$ respectively. If
we are not able to find the $leftNode$, then it means that another helper has successfully deleted the nodes. We can
thus return. \\

\vspace{-8mm}
\begin{algorithm}
\scriptsize
\SetAlgoLined
\textbf{helpFinishDelete}(){}\\
		$currRequest$ $\leftarrow$ $uniqueRequest.get$() \label{line:currread} \\
		\If{$!currRequest.pending$}
		{
			$return$
		}
			$endIdx$ $\leftarrow$ $currRequest.node.index + W - 1$	 \label{line:endidx}\\

			$rightNode$ $\leftarrow$ $top.get()$                 /* Search for the request from the $top$ */ \\
		    $leftNode$ $\leftarrow$ $rightNode.prev$ \\	
			\While{$leftNode.index \ne endIdx$ $\&\&$ $leftNode\ne sentinel$} {
				$rightNode$ $\leftarrow$ $leftNode$ \label{line:right} \\
				$leftNode$ $\leftarrow$ $leftNode.prev$ \label{line:left}
			}
			\If{$leftNode = sentinel$} {
				$return$ /* some other thread deleted the nodes */
			} \label{line:sentinel}
		/* Find the target node */ \\
		$target$ $\leftarrow$ $leftNode$  \label{line:targetstart}\\
		\For{i=0; i $< W$; i++}{
			$target$ $\leftarrow$ $target.prev$	
		} \label{line:targetend}
		$rightNode.prev.compareAndSet(leftNode, target)$ /* Perform the CAS operation and delete the nodes */ \label{line:rightcas} \\
		$currRequest.pending$ $\leftarrow$ \textbf{false} /* Set the status of the delete request to not pending*/ \label{line:setpending}\\

\caption{The $helpFinishDelete$ method} \label{alg:helpfinishdelete}
\end{algorithm}
\vspace{-8mm}

The next task is to find the $target$. The $target$ is $W$ hops away from the $leftNode$.
Lines~\ref{line:targetstart}--\ref{line:targetend} run a loop $W$ times to find the target. Note that we shall never
have any issues with null pointers because $sentinel.prev$ is set to $sentinel$ itself. Once, we have found the
target, we need to perform a CAS operation on the $prev$ pointer of the $rightNode$. We accomplish this in
Line~\ref{line:rightcas}. If the $prev$ pointer of $rightNode$ is equal to $leftNode$, then we set it to $target$.
This operation removes $W$ entries (from $leftNode$ to $base$) from the list. The last step is to set the 
status of the $pending$ field in the current request ($currRequest$) to false (see Line~\ref{line:setpending}).

\section{Proof of Correctness}
\vspace{-3mm}
The most popular correctness criteria for a concurrent shared object is {\em
linearizability}~\cite{linearizability}. Linearizability ensures that within the execution interval of every operation
there is a point, called the linearization point, where the operation seems to
take effect instantaneously and the effect of all the operations on the object
is consistent with the object's sequential specification. 
By the property of compositional linearizability, if each method of an object is linearizable we can 
conclude that the complete object is linearizable.
Thus, if we identify the point of linearization for both the push and the pop method in our implementation, 
we can say that our implementation is linearizable and thus establish its correctness.

Interested readers can refer to the appendices, where we show that our 
implementation is legal and push and pop operations complete in a bounded number of steps.

\begin{theorem}
The $push$ and $pop$ operations are linearizable.
\end{theorem}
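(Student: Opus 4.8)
The plan is to invoke compositional linearizability: it suffices to exhibit, for every execution, a linearization point for each \emph{push} and \emph{pop} so that the induced sequential history obeys LIFO semantics and each point lies inside the corresponding operation's execution interval. I would take the linearization point of a \emph{push} to be the successful CAS on the $top$ pointer in Line~\ref{line:updatetop}. For a \emph{pop} I would place its linearization point somewhere in the window between its single read of the $top$ pointer (Line~\ref{alg:prev_top}) and the successful $getAndSet$ that marks the returned node (Lines~\ref{line:popwhile}--\ref{line:popwhileend}); the precise location inside this window is fixed during the legality argument rather than chosen in advance.

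First I would treat the \emph{push} side. Using the helping protocol together with the $pushed$ flag, I would show that for each push request exactly one $top$-CAS in Line~\ref{line:updatetop} succeeds, so its linearization point is well defined and the redundant helpers have no effect. Because indices are assigned as one more than the predecessor's index in Line~\ref{line:index}, in lock-step with these successful CASes, the push order induced by the linearization points coincides with the order of node indices; since each such CAS is a single atomic step, this order is automatically consistent with real time. The result is a totally ordered spine whose indices strictly increase along the $prev$ pointers.

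The heart of the proof is the \emph{pop} side. Atomicity of the marking $getAndSet$ guarantees that distinct pops return distinct nodes, so the return map is injective. The structural claim I would establish is that when a pop returns a node $r$ after walking down from its snapshot of $top$, every node it traversed above $r$ was already marked before it reached $r$; hence by the time $r$ is marked, all already-pushed nodes lying above $r$ on the spine are popped. The only nodes that can sit above $r$ and still be unmarked are those pushed \emph{after} the snapshot was taken, since the $top$ pointer is read only once; each such node has its push linearization point after the snapshot. I would then argue that one can always slide the pop's point within its window to an instant at which $r$ is the topmost unmarked pushed node, placing it after the linearization points of the (necessarily earlier) pops that removed the nodes above $r$ and before the push points of any higher nodes that appeared only after the snapshot. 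As those higher pushes are concurrent with the pop, this placement respects real time and delivers exactly the LIFO return value.

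The step I expect to be the main obstacle is precisely this scheduling of pop linearization points, because the naive choice of the mark itself can be illegal: a concurrent push may leave a higher unmarked node in the list at the instant of the mark. I would discharge this by an induction on the pop order induced by the marks, showing that the pop removing a given index is always linearized before the pop removing the next lower index and that its point precedes the latter's mark, so the windows nest correctly and a consistent instant always exists. Finally I would verify that the $clean$ operation, whose effect is the CAS in Line~\ref{line:rightcas}, only unlinks nodes that are already marked and hence already popped and absent from the logical stack; therefore it alters neither the multiset of live elements nor any pop's return value, and it preserves reachability of every unmarked node along the $prev$ chain, so it has no bearing on linearizability.
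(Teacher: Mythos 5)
Your overall skeleton is the same as the paper's: the push is linearized at the successful CAS on $top$ (Line~\ref{line:updatetop}), the pop cannot in general be linearized at its mark, and its effective point must instead be placed elsewhere inside its interval --- this is exactly the paper's ``pass point'' construction together with its Case~I/Case~II analysis for pops. The genuine gap is your concrete scheduling rule for pops. You require the pop returning $r$ to be placed \emph{after} the linearization points of the pops that removed the nodes above $r$, and simultaneously \emph{before} the push points of all nodes pushed above its snapshot after the snapshot was taken. These two constraints can contradict each other in executions that are nevertheless linearizable. Consider the chain $sentinel \leftarrow r \leftarrow N$ with $top = N$. Pop $P$ reads its snapshot ($N$) and stalls. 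Push $X$ completes, with push point $p_X$, giving $sentinel \leftarrow r \leftarrow N \leftarrow X$. Pop $Q'$ then starts, reads $top = X$, and pops $X$; pop $Q$ then starts, reads $top = X$, finds $X$ marked, and pops $N$; finally $P$ resumes, finds $N$ marked, and pops $r$. For $P$, the node $X$ is a post-snapshot push, so your rule demands $t_P < p_X$; but $N$ is a node above $r$ removed by $Q$, so your rule also demands $t_P > t_Q$, and because you confine every pop's point to the interval between its own $top$-read and its mark, $t_Q \geq s_Q > p_X$ where $s_Q$ is $Q$'s $top$-read. Hence no admissible $t_P$ exists, even though the execution is linearizable with order $push(X), Q', Q, P$. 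The correct placement must allow $P$ to be linearized \emph{after} a post-snapshot push, provided the pop matching that push is linearized before $P$ as well --- which is precisely what the paper's rule achieves by inserting the pop into the already-constructed sequential history immediately after $R_j$, the pop of the node just above the returned node. That rule is phrased relative to the logical order rather than against real-time push instants, and is therefore immune to this interleaving.

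A secondary flaw: your inductive invariant that ``the pop removing a given index is always linearized before the pop removing the next lower index'' is false in general --- a node can be popped, and only afterwards can a higher-index node be pushed on top of it and popped later; the invariant can hold only when restricted to pops whose traversal paths actually pass through the higher-index node. Neither defect invalidates your framework as a whole: assigning explicit instants can be made to work if the upper constraint is weakened as above, or you can argue as the paper does, by induction on the number of requests ordered by pass points, peeling off the last request and inserting it into the linearization of the rest. But as written, the scheduling step that you yourself flagged as the main obstacle is exactly the step that fails.
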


\begin{proof}
Let us start out by defining the notion of ``pass points''. The pass point of a $push$ 
operation is when it successfully updates the $top$ pointer in the function $updateTop$
(Line~\ref{line:updatetop}). The pass point of the $pop$ operation, is when it successfully marks a node,
or when it throws the {\em EmtpyStackException}.
Let us now try to prove by mathematical
induction on the number of requests
that it is always possible to construct a linearizable execution that is equivalent to
a given execution. In a linearizable execution all the operations are arranged in a sequential
order, and if request $r_i$ precedes $r_j$ in the original execution, then $r_i$ precedes $r_j$
in the linearizable execution as well.

\noindent \textbf{Base Case:} Let us consider an execution with only one pass point. Since
the execution is complete, we can conclude that there was only one request in the system.
An equivalent linearizable execution will have a single request. The outcome of the request
will be an {\em EmptyStackException} if it is a $pop$ request, otherwise it will push a node
to the stack. Our algorithm will do exactly the same in the $pop$ and $attachNode$ methods
respectively. Hence, the executions are equivalent. \\
\noindent \textbf{Induction Hypothesis:} Let us assume that all executions with $n$ requests
are equivalent to linearizable executions. \\
\noindent \textbf{Inductive Step:} Let us now prove our hypothesis for executions with $n+1$
requests. Let us arrange all the requests in an ascending order of the execution times 
of their pass points. Let us consider the last ($(n+1)^{th}$) request just after the pass point of the
$n^{th}$ request. Let the last request be a $push$. If the $n^{th}$ request is also a $push$, 
then the last request will use the $top$ pointer updated
by the $n^{th}$ request. Additionally, in this case 
the $n^{th}$ request will not see any changes made by the
last request. It will update $last.next$ and the $top$ pointer, before the last request updates
them. In a similar manner we can prove that no prior $push$ request will see the last request.
Let us now consider a prior $pop$ request. A $pop$ request scans all the nodes between the
$top$ pointer and the sentinel. None of the pop requests will see the updated $top$ pointer
by the last request because their pass points are before this event. Thus, they have no
way of knowing about the existence of the last request. Since the execution of the first $n$
requests is linearizable, an execution with the $(n+1)^{th}$ push request is also linearizable
because it takes effect at the end (and will appear last in the equivalent sequential order). 

Let us now consider the last request to be a $pop$ operation. A $pop$ operation writes
to any shared location only after its pass point. Before its pass point, it does not do
any writes, and thus all other requests are oblivious of it. 
Thus, we can remove the last request, and the responses of the first $n$ requests will remain the same.
Let us now consider an execution fragment consisting of the first $n$ requests. It is equivalent to
a linearizable execution, $\mathcal{E}$. This execution is independent of the $(n+1)^{th}$ request. 

Now, let us try to create a linearizable execution, $\mathcal{E'}$, which has an event corresponding
to the last request. Since the linearizable execution is sequential, let us represent the request
and response of the last $pop$ operation by a single event, $R$. 
Let us try to modify $\mathcal{E}$ to create $\mathcal{E'}$. Let the sequential execution corresponding
to $\mathcal{E}$ be $\mathcal{S}$.

Now, it is possible that $R$ could have read the $top$ pointer long ago, and
is somewhere in the middle of the stack. In this case, we cannot assume that $R$ is the last
request to execute in the equivalent linearizable execution. Let the state of the stack before the 
$pop$ reads the top pointer be $\mathcal{S'}$. The state $\mathcal{S'}$ is independent of the $pop$ request.
Also note that, all the operations that have arrived after the $pop$ operation have read the $top$ pointer, and 
overlap with the $pop$ operation. The basic rule of $linearizability$ states that, if any operation $R_i$ 
precedes $R_j$ then $R_i$ should precede $R_j$ in the equivalent sequential execution also. Whereas, in case 
the two operations overlap with each other, then their relative order is undefined and any ordering of these 
operations is a valid ordering~\cite{artOfMulti}. 

In this case, we have two possibilities: (I) $R$ returns the node that it had read as the top pointer as an output
of its $pop$ operation, or (II) it returns some other node. 

\noindent \underline{Case I:} In this case, we can consider the point at which $R$ reads the top pointer as the point
at which it is {\em linearized}. $R$ in this case reads the stack top, and pops it.

\noindent \underline{Case II:} In this case, some other request, which is concurrent must have popped the node that
$R$ read as the top pointer. Let $R$ return node $N_i$ as its return value. This node must be between the top pointer
that it had read (node $N_{top}$), and the beginning of the stack. Moreover, while traversing the stack from $N_{top}$
to $N_i$, $R$ must have found all the nodes in the way to be marked. At the end it must have found $N_i$ to be unmarked,
or would have found $N_i$ to be the end of the stack (returns exception).

Let us consider the journey for $R$ from $N_{top}$ to $N_i$. Let $N_j$ be the last node before $N_i$ that has been
marked by a concurrent request, $R_j$. We claim that if $R$ is linearized right after $R_j$, and the rest of the
sequences of events in $\mathcal{E}$ remain the same, we have a linearizable execution ($\mathcal{E'}$). 

Let us consider request $R_j$ and its position in the sequential execution, $\mathcal{S}$. At its point of
linearization, it reads the top of the stack and returns it (according to $\mathcal{S}$). This node $N_j$
is the successor of $N_i$. At that point $N_i$ becomes the top of the stack. At this position, if we insert
$R$ into $S$, then it will read and return $N_i$ as the stack top, which is the correct value. Subsequently,
we can insert the remaining events in $S$ into the sequential execution. They will still return the same set
of values because they are unaffected by $R$ as proved before. 

This proof can be trivially extended to take cleanup operations into account. 
\end{proof}

\section{Conclusion}
The crux of our algorithm is the $clean$ routine, which ensures that the size of the 
stack never grows beyond a predefined factor, $W$. This feature allows for
a very fast $pop$ operation, where we need to find the first entry from the top of the
stack that is not marked. This optimization also allows for an increased amount of parallelism, and
also decreases write-contention on the $top$ pointer because it is not updated by $pop$ operations. 
As a result, the time 
per $pop$ operation is very low. The $push$ operation is also designed to be very fast. It simply
needs to update the $top$ pointer to point to the new data. To provide wait-free guarantees
it was necessary to design a $clean$ function that is slow.
Fortunately, it is not invoked for an average of $W-1$ out of $W$ invocations of $push$ and $pop$. 
We can tune the frequency of the $clean$ operation by varying
the parameter, $W$ (to be decided on the basis of the workload).

\vspace{-2mm}
\bibliographystyle{splncs03}

\begin{thebibliography}{10}
\providecommand{\url}[1]{\texttt{#1}}
\providecommand{\urlprefix}{URL }

\bibitem{arrayBased}
Afek, Y., Gafni, E., Morrison, A.: Common2 extended to stacks and unbounded
  concurrency. Distributed Computing  20(4),  239--252 (2007)

\bibitem{DECS}
Bar-Nissan, G., Hendler, D., Suissa, A.: A dynamic elimination-combining stack
  algorithm. In: Principles of Distributed Systems, pp. 544--561. Springer
  (2011)

\bibitem{restrictedStack}
David, M., Brodsky, A., Fich, F.: Restricted stack implementations. In:
  Fraigniaud, P. (ed.) Distributed Computing, Lecture Notes in Computer
  Science, vol. 3724, pp. 137--151. Springer Berlin Heidelberg (2005)

\bibitem{timestamp}
Dodds, M., Haas, A., Kirsch, C.M.: A scalable, correct time-stamped stack
  (2014)

\bibitem{newUniversal}
Fatourou, P., Kallimanis, N.D.: A highly-efficient wait-free universal
  construction. In: SPAA (2011)

\bibitem{flatCombining}
Hendler, D., Incze, I., Shavit, N., Tzafrir, M.: Flat combining and the
  synchronization-parallelism tradeoff. In: Proceedings of the 22nd ACM
  symposium on Parallelism in algorithms and architectures. pp. 355--364. ACM
  (2010)

\bibitem{WaitFreeSharedCounter1}
Hendler, D., Kutten, S.: Constructing shared objects that are both robust and
  high-throughput. In: Distributed Computing, pp. 428--442. Springer (2006)

\bibitem{WaitFreeSharedCounter2}
Hendler, D., Kutten, S., Michalak, E.: An adaptive technique for constructing
  robust and high-throughput shared objects-technical report (2010)

\bibitem{LockFree}
Hendler, D., Shavit, N., Yerushalmi, L.: A scalable lock-free stack algorithm.
  In: Proceedings of the Sixteenth Annual ACM Symposium on Parallelism in
  Algorithms and Architectures. pp. 206--215. SPAA '04, ACM, New York, NY, USA
  (2004)

\bibitem{oldUniversal}
Herlihy, M.: Wait-free synchronization. ACM Trans. Program. Lang. Syst.  13(1),
   124--149 (Jan 1991)

\bibitem{artOfMulti}
Herlihy, M., Shavit, N.: The Art of Multiprocessor Programming. Elsevier (2012)

\bibitem{linearizability}
Herlihy, M.P., Wing, J.M.: Linearizability: A correctness condition for
  concurrent objects. ACM Trans. Program. Lang. Syst.  12(3),  463--492 (Jul
  1990)

\bibitem{michaelScottLocked}
Michael, M.M., Scott, M.L.: Nonblocking algorithms and preemption-safe locking
  on multiprogrammed shared memory multiprocessors. Journal of Parallel and
  Distributed Computing  51(1),  1 -- 26 (1998)

\bibitem{combiningFunnels}
Shavit, N., Zemach, A.: Combining funnels: a dynamic approach to software
  combining. Journal of Parallel and Distributed Computing  60(11),  1355--1387
  (2000)

\bibitem{CopingWithPara}
Treiber, R.K.: Systems programming: Coping with parallelism. International
  Business Machines Incorporated, Thomas J. Watson Research Center (1986)

\end{thebibliography}

\newpage
\section* {\textbf{Appendices of A Wait-Free Stack}}
\begin{appendix}

\section{Asymptotic Worst-Case Time Complexity }
Let us now consider the asymptotic worst-case time complexity of the $push$, $pop$ and $clean$ methods
in terms of the number of concurrent threads in the system ($N$), the actual size of the
stack($S$) and the parameter $W$. 

\subsection{The $clean$ method}
The time complexity of the $clean$ method is the same as that of the
$helpDelete$ function. The $helpDelete$ function finds the $delete$ 
request with the minimum phase number, which requires $O(N)$ steps. 
After having found the request with the minimum phase number, it calls 
the $uniqueDelete$ function. The $uniqueDelete$ function contains a $while$ 
loop. In the worst case this $while$ loop might execute $N$ times. Now, when 
we look into the body of this $while$ loop, everything except the call to 
the $helpFinishDelete$ function has O(1) time complexity. The $helpFinishDelete$ 
function contains two loops. The first is a $while$ loop, which traverses 
the stack from the top to the point it finds the desired node. In the 
worst possible case, this loop might end up traversing the complete stack.
We do not allow the size of the stack to increase by more than 
a factor of $W$ as compared to $S$, the worst case time complexity of this loop is therefore
O($WS$). 
The other loop in the function is a $for$ loop, with time complexity O(W). 
So, the worst case time complexity of the $helpFinishDelete$ function is O($WS$) 
and therefore, the worst case time complexity of the clean function is O($NWS$).
The high time complexity of this method is an achilles heel of our algorithm; hence, we
are working on reducing its complexity as well as practical run time.
However, it should be noted that this function is meant to be called infrequently (1 in
$W$ times).

\subsection{The $pop$ method}
In the $pop$ method, everything except the $while$ loop and the call to the 
$tryCleanUp$ function take O(1) time. The $while$ loop is iterated over till 
the time an unmarked mode is encountered. In our algorithm, as soon as $W$ 
consecutive nodes get marked, we issue a $cleanUp$ request for it and at 
any point of time there can be at most $N-1$ $clean$ requests in the system. 
Thus after having traversed at most $WN$ nodes, a $pop$ request is assured to 
find an unmarked node. Now, if we analyze the $tryCleanUp$ method, 
in the worst case scenario, the $while$ loop inside the function will be iterated over 
$W$ times but the $clean$ method will only be called at most once. In fact, the $clean$ 
method is called only once for a group of $W+1$ operations, and therefore, the worst case 
time complexity of the $tryCleanUp$ function, which is O($NWS$), will be incurred very infrequently
(1 in $W$). 
Nevertheless the worst case time complexity of the $pop$ operation is O($NWS$), and the
amortized time complexity (across $W$ pop operations is $O(NS)$. 

\subsection{The $push$ method \label{push_time_complexity}}
The time complexity of the $push$ method is the same as that of the $help$ function. 
Since the $help$ function is supposed to find the request with the least phase number, 
it takes at least O($N$) time. After having found the request with the minimum 
phase number, the $help$ function calls the $attachNode$ function. Note that for any 
$push$ request, the maximum number of times the $while$ loop in the $attachNode$ 
function could possibly execute is of O($N$). Also note that everything inside the $while$ loop, 
except the call to the $updateTop$ function requires only a constant amount of 
time for execution. If the index of the newly pushed node is not a multiple of 
$W$, the $updateTop$'s time complexity is O(1), and therefore the time complexity
of the $push$ operation is O($N$), but if this is not the case, the time complexity 
of the $updateTop$ function becomes dependent on the time 
complexity of the $tryCleanUp$ function, which is O($NWS$) in the worst case. 

All our methods: $push$, $pop$ and $clean$ are bounded wait-free.

\section{Background}
\label{sec:back}

A $stack$ is a data structure that provides $push$ and $pop$ operations with $LIFO$((Last-in-First-Out) semantics.  A
data structure is said to respect LIFO semantics, if the last element inserted is the first to be removed.

\subsection{Correctness}
\label{subsec:linearizability}
The most popular correctness criteria for a concurrent shared object is {\em
linearizability}~\cite{linearizability}. Let us define it formally.

Let us define two kinds of events in a
method call namely {\em invocations (inv)} and {\em responses (resp)}. A
chronological sequence of {\em invocations (inv)} and {\em responses (resp)}
events in the entire execution is known as a {\em history}.  Let a matching
invocation-response pair with a sequence number $i$ be referred to as request
$r_i$. Note that in our system, every invocation has exactly one matching
response.  A request $r_i$ precedes request $r_j$, if
$r_i$'s response comes before $r_j$'s invocation.  This is denoted by $r_i$ $\prec$
$r_j$. A history, $H$, is said to be sequential if an invocation is immediately
followed by its response. A subhistory $(H|T)$ is the subsequence of $H$
containing all the events of thread $T$. Two histories, $H$ and $H'$, are
equivalent if for every thread , $H|T$ $=$ $H'|T$. 
A complete history - {\em
complete(H)} is a history that does not have any pending invocations. 
The
{\em sequential specification} of an object constitutes of the set of all sequential
histories that are correct. A sequential history is {\em legal} if for every object
x, $H|x$ is in the sequential specification of x.

A history $H$ is linearizable if {\em complete(H)} is equivalent to a legal
sequential history, $S$. Additionally, if $r_i$ $\prec$ $r_j$ in
{\em complete(H)}, then $r_i$ $\prec$ $r_j$ in S also. Alternatively we can say,
linearizability ensures that within the execution interval of every operation
there is a point, called the linearization point, where the operation seems to
take effect instantaneously and the effect of all the operations on the object
is consistent with the object's sequential specification. 

\subsection{Progress}
\label{subsec:progress}

Generally, there are two kinds of implementations for a concurrent object:
$blocking$ and $non$-$blocking$. Blocking algorithm use locks. Approaches
that protect critical sections with locks unnecessarily limit parallelism
and are known to be inefficient.

In comparison non-blocking implementations can prove to be much faster.
Such algorithms rely on atomic primitives such as 
compare-And-Set(CAS), LL/SC, and getAndIncrement. They do not have
critical sections.
In this context,  lock-freedom is defined as a property that ensures
that at any point of time at least one thread makes progress.
Or in other words, the system as a whole is always making progress. They
can still have problems of starvation.

Wait-free algorithms provide starvation freedom in addition
to being lock-free. They ensure that every process completes
its operation in a finite number of steps. The wait-free algorithms
have a notion of inherent {\em fairness}, where {\em fairness} measures the degree of 
imbalance across different threads. We quantify fairness as the ratio of the average number
of operations completed by an thread divided by the number of operations completed by the fastest
thread. Figure~\ref{fig:fairness} shows a comparison of the fairness
of our wait-free stack $WF$ with the lock-free stack $LF$ in~\cite{LockFree} 
and the locked stack in~\cite{michaelScottLocked} $LCK$. 
For the $WF$ version, the average {\em fairness} is around 80\%, whereas for $LF$ the {\em fairness} 
goes as low as 50\%, and for $LCK$, it even drops to 25\%.
Also, as shown in figure ~\ref{fig:pdfLF} and ~\ref{fig:pdfWF},
in the case of $WF$, almost all the threads have completed more than 90\% of their work,
whereas for $LF$ only 11 out of 64 threads have completed more than 90\% of their work
and for some threads the percentage of work done is as low as 40\% only.

\begin{figure*}[!htb]
\begin{center}
\begin{tabular}{cc}

\begin{minipage}{.33\textwidth}
\includegraphics[width=0.99\textwidth]{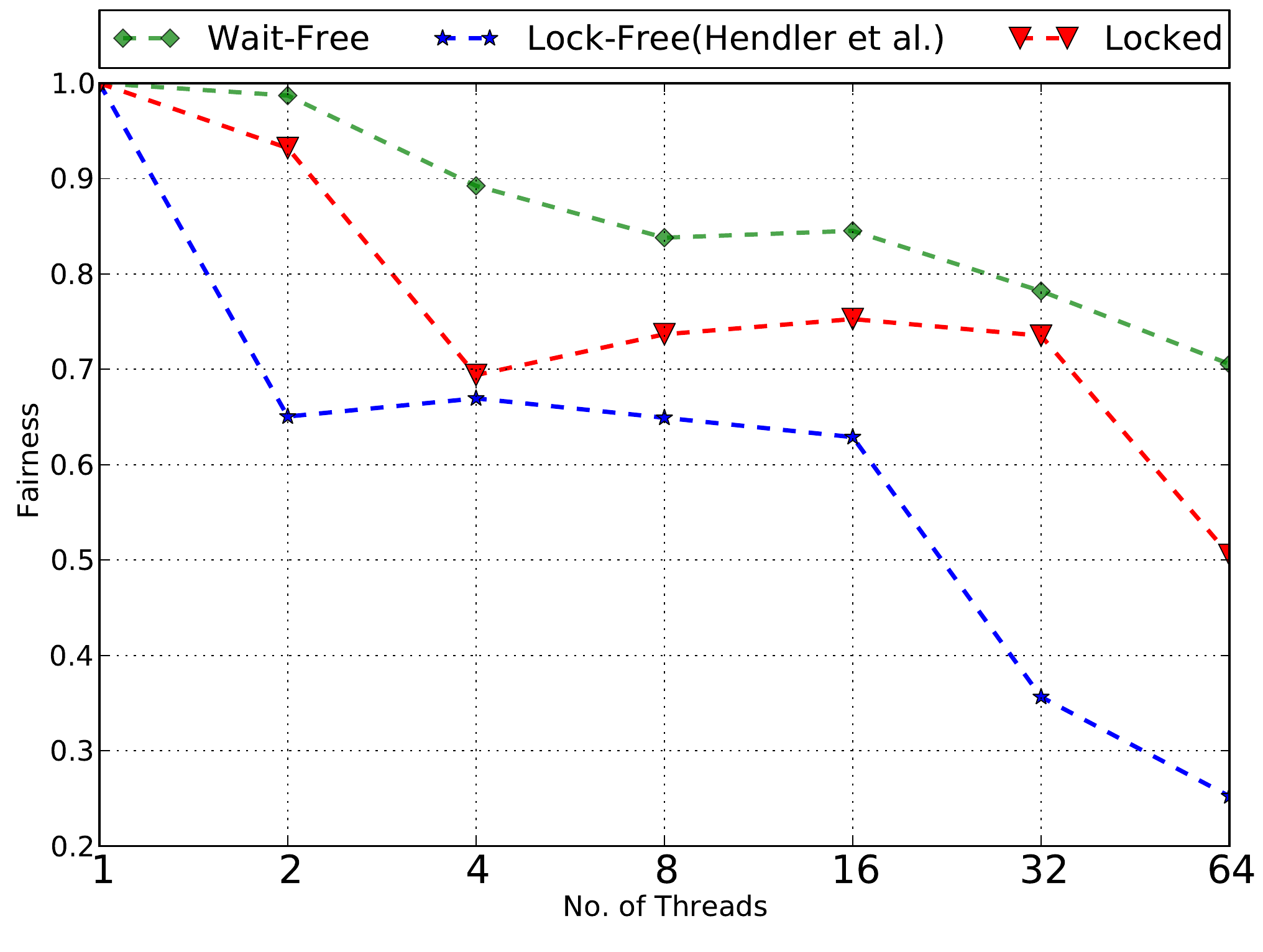}
\caption {Fairness \label{fig:fairness} }
\end{minipage}

\begin{minipage}{.33\textwidth}
\includegraphics[width=0.99\textwidth]{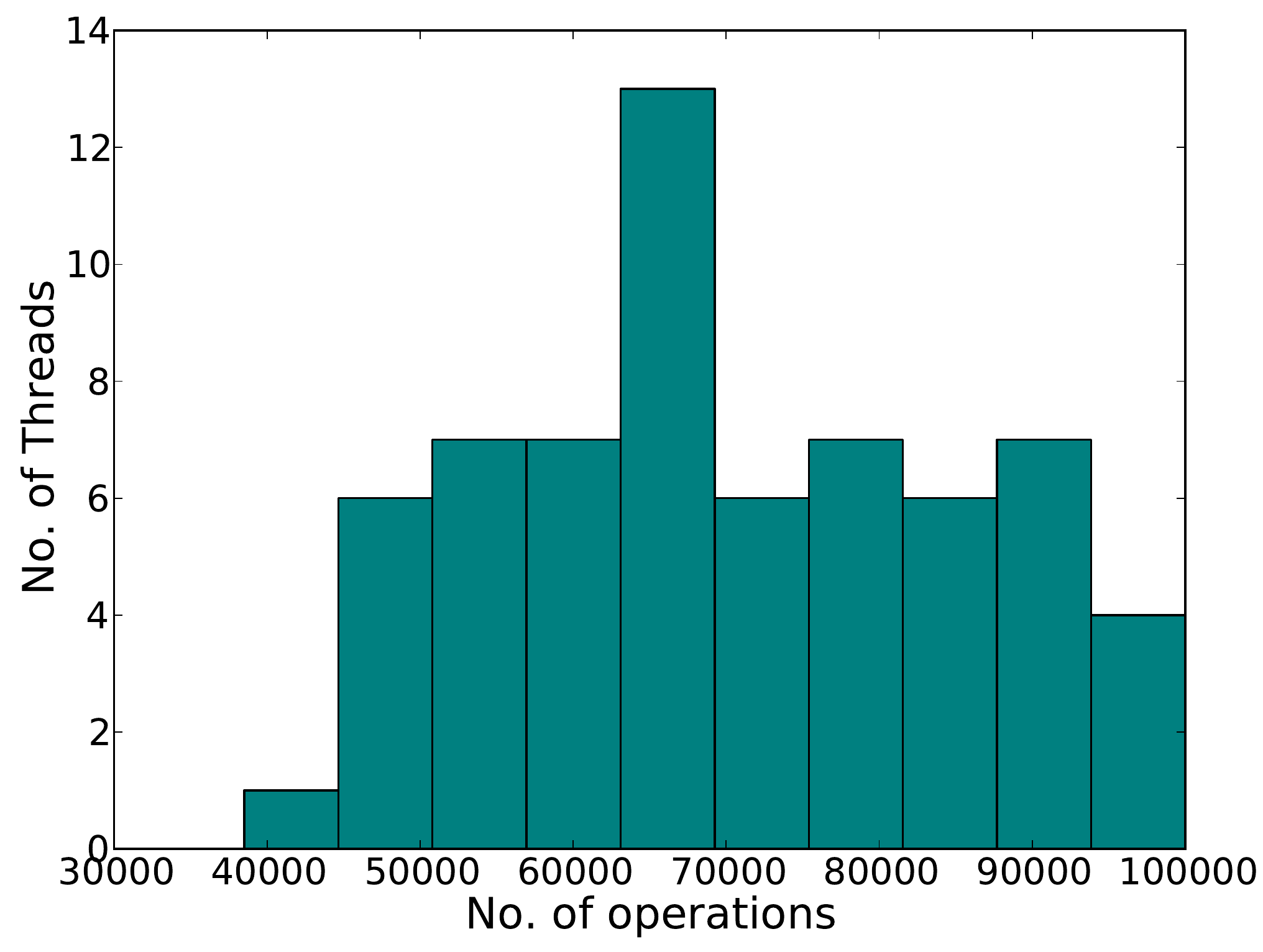}
\caption {Operations done by various threads : Lock-free \label{fig:pdfLF} }
\end{minipage}

\begin{minipage}{.33\textwidth}
\includegraphics[width=0.99\textwidth]{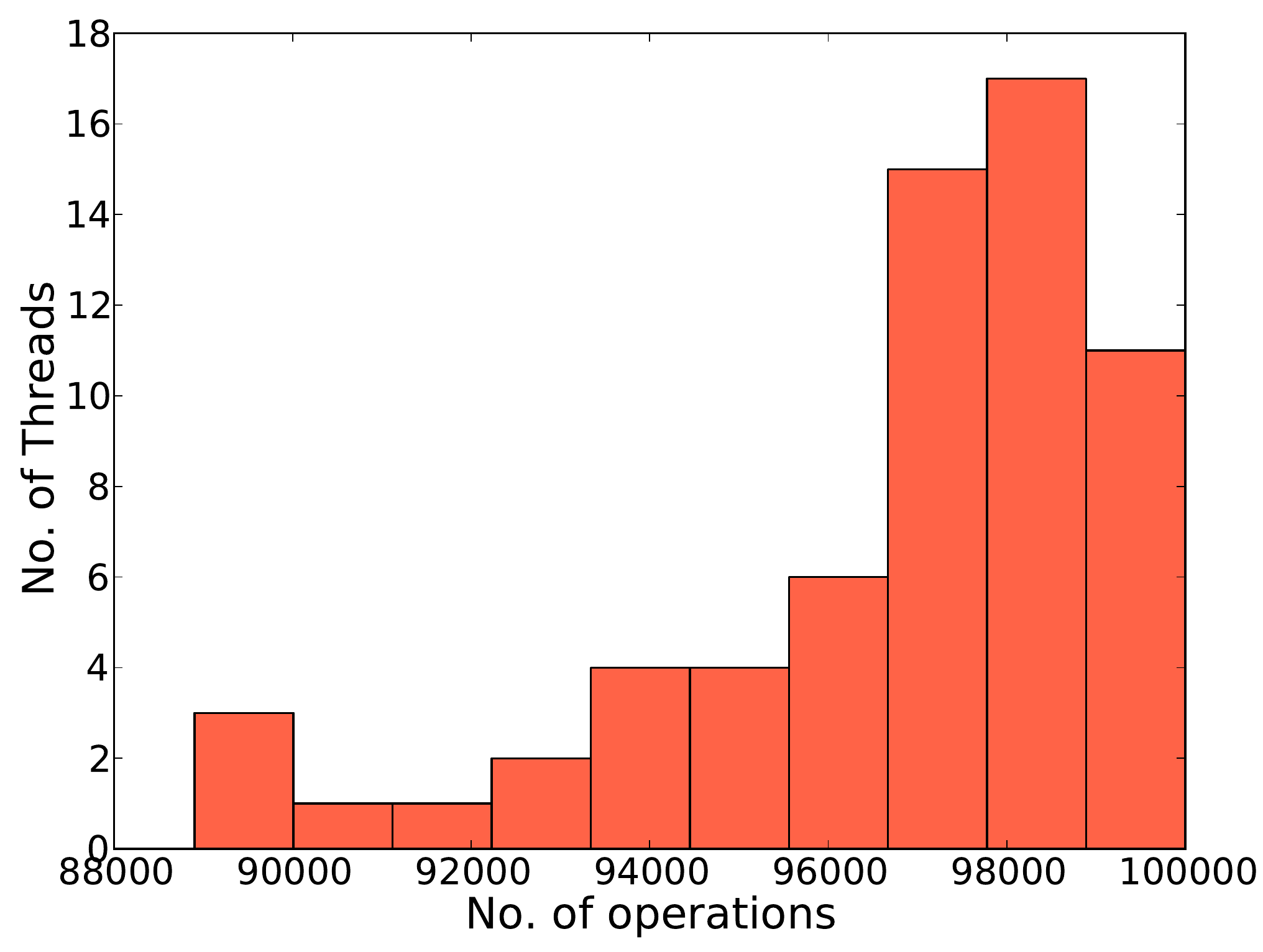}
\caption {Operations done by various threads : Wait-free\label{fig:pdfWF} }
\end{minipage}

\end{tabular}
\end{center}
\end{figure*}

\section{Proof of Correctness}

\begin{lemma}
Every $push$ request is inserted at Line~\ref{line:pushcas} at most once.
\label{lemm:atmost}
\end{lemma}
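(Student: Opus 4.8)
The plan is to argue by contradiction: assume some $push$ request $p$, carrying a freshly created node $p.node$, has its node installed by a successful compare-and-set at Line~\ref{line:pushcas} on two occasions, at times $t_1 < t_2$. Since each request creates its node exactly once (in $push$) and every helper operates on $request.node$, the node $p.node$ is associated with $p$ alone, so counting successful installations of $p.node$ is the same as counting how often $p$ is inserted. I would first isolate the one structural invariant that does most of the work, namely that the $nextDone$ field of any node, once it leaves the value $(null,\textbf{false})$, never returns to it. This follows by inspecting every write to $nextDone$: the only writes are the installation CAS at Line~\ref{line:pushcas}, which performs $(null,\textbf{false})\to(myNode,\textbf{false})$, and the clearing CAS that immediately follows it, which performs $(myNode,\textbf{false})\to(null,\textbf{true})$. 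Neither write ever produces $(null,\textbf{false})$, so the claimed monotonicity holds and, in particular, there is no ABA hazard for the expected value $(null,\textbf{false})$ used at Line~\ref{line:pushcas}.

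With this invariant in hand, let $last_1$ and $last_2$ be the predecessor nodes on which the two installing CASes act. If $last_1=last_2$, then after $t_1$ this node's $nextDone$ equals $(p.node,\textbf{false})$ and, by the invariant, never again equals $(null,\textbf{false})$; hence the CAS at $t_2$ cannot succeed, a contradiction. The remaining case, $last_1\neq last_2$, is where I would spend most of the effort. Here I would use two facts about $updateTop$: (i) the $top$ pointer is advanced (Line~\ref{line:updatetop}) only \emph{after} $request.pushed$ is set to $\textbf{true}$ (Line~\ref{line:announce}), and (ii) the flag $request.pushed$ is monotone, being set only from $\textbf{false}$ to $\textbf{true}$. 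Fact (i) shows that throughout the interval from $t_1$ until $p.pushed$ becomes $\textbf{true}$, the $top$ pointer still equals $last_1$: no competing push can attach to $last_1$ once its $nextDone$ holds $(p.node,\textbf{false})$, so none can advance $top$, and the advance of $top$ to $p.node$ (by $p$ or any helper) has not yet occurred. Consequently any helper of $p$ performing its CAS during this window reads $last=top.get()=last_1$ (or an even older former top, whose $nextDone$ is no longer $(null,\textbf{false})$) and therefore fails, again by the invariant.

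What remains, and what I expect to be the genuine obstacle, is a helper $T_2$ whose CAS is delayed past the moment $p.pushed$ turns $\textbf{true}$ and $top$ advances to $p.node$. The resolution is that reading $last=top.get()=p.node$ can occur only after the advance at Line~\ref{line:updatetop}, which by Fact (i) happens after $p.pushed$ was set; hence when $T_2$ re-checks the condition $!request.pushed$ at Line~\ref{line:ifdone} (or the enclosing loop guard) it observes $\textbf{true}$ by Fact (ii) and skips the CAS entirely. Thus no helper can install $p.node$ a second time, contradicting the existence of $t_2$. The delicate points to get exactly right are the ordering of Lines~\ref{line:announce} and~\ref{line:updatetop} inside $updateTop$, and the argument that a stale $last$ value can never denote a node whose $nextDone$ has reverted to $(null,\textbf{false})$; once these are pinned down, double insertion is excluded and the lemma follows.
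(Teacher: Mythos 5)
Your proof is correct and takes essentially the same route as the paper's: the paper enumerates the mandatory step sequence (successful CAS at Line~\ref{line:pushcas}, then $request.pushed \leftarrow$ \textbf{true} at Line~\ref{line:announce}, then the $top$ CAS at Line~\ref{line:updatetop}) and argues that a second successful insertion of the same request would have to read a newer $top$ value and therefore observe the request's status as already done, which is exactly your facts (i) and (ii) combined with the no-reversion invariant on $nextDone$. The only difference is one of rigor rather than of approach: you make explicit the monotonicity of $nextDone$ (ruling out ABA) and the case of stale reads of older former tops, points the paper's proof handles implicitly.
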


\begin{proof}
To the contrary, let us assume that the same request is inserted at least twice in Line~\ref{line:pushcas}.
Let us consider the sequence of steps that need to take place. 
\begin{description}
\item [Step 0:] Read the value of $last$ and $next$, and observe that $next = null$. 
\item [Step 1:] We read the status as \START in Line~\ref{line:ifdone}.
\item [Step 2:] The CAS succeeds in Line~\ref{line:pushcas}.
\item [Step 3:] Some thread reads {\em next $\ne$ null} in Line~\ref{line:nextnonnull} ($updateTop$ function).
\item [Step 4:] The status of the node is updated to \DONE by some thread in Line~\ref{line:announce}.
\item [Step 5:] The top pointer is changed in Line~\ref{line:updatetop}.
\end{description}

Let us now explain the steps and mention why these steps need to be performed
in a sequence (not necessarily by the same thread). To insert any node in
Line~\ref{line:pushcas} it is necessary to perform steps 0, 1 and 2 because to
reach Line~\ref{line:pushcas}, it is necessary to satisfy the condition of the
{\em if} statement in Line~\ref{line:ifdone}.  At this point, the $next$
pointer has been updated, and the $top$ pointer has not been updated.  It is
not possible to insert any other request till the $next$ pointer does not
become $null$. This is only possible when we update the $top$ pointer. To
update the $top$ pointer some thread -- either the thread that is doing the
$push$ operation, or some other thread --  needs to successfully perform the
$updateTop$ operation. This can only be achieved if a thread executes Lines
~\ref{line:nextnonnull} to ~\ref{line:updatetop}, or in other words, performs steps 3, 4,
and 5. Before the $top$ pointer is updated in Line~\ref{line:updatetop},
another concurrent $push$ request cannot be successful because two $push$
requests cannot simultaneously perform a successful CAS operation in step 2.  

Now we have proved that if any $push$ operation has successfully completed step
2 (performed the CAS on the $next$ pointer), then till steps 3, 4, and 5 are
performed no other $push$ request can be successful. This means that between
any two successful $push$ operations, the status of the request needs to be
changed (step 4).  Let us now assume that two $push$ requests for the same node
are successful. Let the request that performs step 2 first be $R_i$, and let
the other request be $R_j$. We denote this fact as: $R_i \prec R_j$. 

$R_j$ must read a different value of the stack top in step 0. Otherwise, it
will find $last.next$ to be non-null and it will not proceed beyond step 0.
Subsequently, it will try to read the status of the request in step 1.  Note
that this step is preceded by step 4 of $R_i$. Thus $R_j$ will read the status
to \DONE, and it will not be able to proceed to step 2. Consequently, $R_j$
will not be able to do the $push$ operation done by $R_i$ once again. 

Hence, the lemma stands proved.
\end{proof}

\begin{lemma}
A $push$ request always adds an entry to the top of the stack in Line~\ref{line:pushcas} in
a bounded number of steps. 
\label{lemm:atleast}
\end{lemma}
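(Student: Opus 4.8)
The plan is to first reduce the claim to a statement purely about the \textbf{while} loop in \textbf{attachNode}. A call to $push$ for a request $R$ with phase $p$ invokes $help(R)$ exactly once; if $R$ has already been inserted by a helper we are done, so assume $R$ is still unpushed when $help(R)$ runs. Then $R$ is a candidate for the minimum-phase unpushed request, so $minReq.phase \le p$ and the guard in $help$ is not triggered: $help$ necessarily calls \textbf{attachNode}$(R)$, either directly because $R = minReq$, or via the explicit second call in Line~\ref{line:attachreq}. Moreover, $R.pushed$ is set to \textbf{true} (Line~\ref{line:announce}) only along the path that follows a successful CAS of $R.node$ at Line~\ref{line:pushcas}, so $R.pushed$ becomes true if and only if $R.node$ has been inserted at Line~\ref{line:pushcas} by some thread. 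It therefore suffices to show that the loop guarded by $!R.pushed$ in \textbf{attachNode}$(R)$ terminates within a bounded number of steps, since its termination is equivalent to $R.node$ being inserted at Line~\ref{line:pushcas}.

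First I would establish that no iteration of this loop is wasted. In each iteration the thread re-reads $top$ and $last.nextDone$ and revalidates them (Lines~\ref{line:notlast} and~\ref{line:nextnull}); a stale read merely triggers a retry. If the top slot is free and $R$ is still pending, the thread attempts the CAS at Line~\ref{line:pushcas}: on success $R.node$ is inserted and the following $updateTop$ sets $R.pushed$ (Line~\ref{line:announce}), ending the loop; on failure some other request has won the slot, and the mandatory $updateTop$ call helps that request complete, advancing $top$ past the newly inserted node. By Lemma~\ref{lemm:atmost} no node is inserted twice, so every non-terminating iteration corresponds to a distinct successful insertion by a foreign request. Hence the iteration count is bounded by the number of foreign pushes that can be interposed before $R.node$ is inserted.

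Bounding that number of foreign insertions is where the phase mechanism is essential. Since phases are handed out by a monotone $getAndIncrement$ and each of the $N$ threads owns a single $announce$ slot, at any instant there are at most $N$ active requests, and the set of unpushed announced requests with phase $\le p$ can only shrink, as no request created after $R$ can carry phase $\le p$. Once every request with phase $< p$ has been pushed, $R$ is the unique minimum-phase unpushed request, so every thread that subsequently enters $help$ selects $R$ as $minReq$ and inserts $R.node$ (Line~\ref{line:pushcas}) before attempting its own push. Consequently $R$ cannot be bypassed indefinitely, the number of foreign insertions preceding $R$'s own is $O(N)$, and, combined with the per-iteration $O(1)$ cost (treating $updateTop$ as $O(1)$ outside the cleanup branch), $R.node$ is inserted at Line~\ref{line:pushcas} within a bounded number of steps.

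The main obstacle I anticipate is making the $O(N)$ count in the last step airtight: I must rule out a scenario in which higher-phase requests are repeatedly slipped in ahead of $R$ while lower-phase requests are still draining. This requires arguing carefully that (i) the window between a thread obtaining its phase and writing its $announce$ entry cannot manufacture fresh phase-$\le p$ competitors, and (ii) once $R$ attains the minimum phase, the min-phase helping in $help$ forces the very next push of every thread to insert $R.node$ first, so at most one foreign insertion per thread can precede $R$'s. Discharging these two points pins the bypass count to $O(N)$ and yields the bounded-step guarantee.
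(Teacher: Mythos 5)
Your reduction to termination of the \textbf{attachNode} loop, the charging of wasted iterations to foreign insertions via Lemma~\ref{lemm:atmost}, and the key fact you lean on --- once $R$ is the minimum-phase unpushed request in $announce$, every thread that scans must help $R$ before doing anything else --- are all sound, and that key fact is exactly what the paper's own proof uses. The paper, however, argues non-quantitatively and by contradiction: assume some request is never fulfilled, take the one of least phase among these, note that all lower-phase requests eventually complete, after which every thread helps $R_i$ and one of the helpers' CAS operations must succeed. It never counts anything, and correspondingly only establishes eventual completion rather than an explicit bound; your counting approach is genuinely more ambitious.

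The gap is in the counting itself: the $O(N)$ bound, and specifically your sub-claim (ii) that ``at most one foreign insertion per thread can precede $R$'s,'' is false, and the scenario you hope to ``rule out'' in your obstacle paragraph actually occurs. Suppose thread $T$'s scan in $help$ returns as $minReq$ some unpushed request $R_k$ with $R_k.phase < p$ (so $minReq \neq R$). Then $attachNode(minReq)$ returns as soon as $R_k$ is pushed, after which $T$ proceeds to Line~\ref{line:attachreq} and inserts \emph{its own} request --- whose phase exceeds $p$ --- possibly ahead of $R$. Thread $T$ can repeat this once for every distinct lower-phase request it helps drain: its next scan happens after the previous $minReq$ was pushed, so it picks a \emph{different} member of the set $L$ of announced-but-unpushed requests with phase less than $p$, helps it, and again slips its own new request in ahead of $R$. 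Since $|L| \le N-1$ (phases are monotone and each thread has one outstanding request), each thread can slip in up to $N-1$ of its own requests this way, plus one more from a scan that began before $R$ was announced; the worst-case total of foreign insertions ahead of $R$ is therefore $\Theta(N^2)$, not $O(N)$. The repair is to bound the slipping rather than forbid it: charge each slip by $T$ to the distinct member of $L$ drained just before it, giving at most $(N-1) + N(N-1) = N^2 - 1$ foreign insertions in all. Your charging argument then goes through verbatim with a quadratic rather than linear bound, which is all the lemma asserts, so your proof is repairable --- but as written, point (ii) is not a provable statement, and any attempt to discharge it as stated would fail against the schedule above.
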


\begin{proof}
Let us assume that a $push$ request, $R_i$ never gets fulfilled. This can
happen because it either fails the $if$ conditions in Lines~\ref{line:notlast} and \ref{line:nextnull}, 
or the CAS operation in Line~\ref{line:pushcas}. This can only happen if some other $push$
request makes progress. Now, let us assume that $R_i$ has the least phase
number out of all the $push$ requests that remain unfulfilled for an unbounded
amount of time. 

Since $R_i$ has the least phase number out of all the unfulfilled requests, all
other request with a lower phase number must have gotten fulfilled. This means
that there is a point of time at which $R_i$ has the least phase in the $announce$
array. At this point all the threads must be helping $R_i$ to complete its request.
One of the threads needs to perform a successful CAS in Line~\ref{line:pushcas}. Either that thread
or some other thread can update the $top$ pointer. In this manner, request $R_i$ will
get satisfied. 

Hence, it is not possible to have a request, $R_i$ that waits for an infinite amount
of time to get fulfilled. 
\end{proof}

\begin{theorem}
A $push$ request adds an entry only once to the stack in a bounded
number of steps. Futhermore, if we
just consider the $next$ pointers,  the stack 
is always a linked list without duplicate nodes. 
Thus, the $push$ operation is lock-free.
\end{theorem}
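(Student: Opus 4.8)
The plan is to derive the theorem almost entirely from Lemmas~\ref{lemm:atmost} and~\ref{lemm:atleast}, supplemented by a short structural invariant on the $next$ (i.e.\ $nextDone$) pointers. First I would settle the ``adds an entry exactly once, in a bounded number of steps'' clause. Lemma~\ref{lemm:atmost} shows that any fixed $push$ request succeeds at the CAS of Line~\ref{line:pushcas} \emph{at most} once, and Lemma~\ref{lemm:atleast} shows it succeeds \emph{at least} once and does so within a bounded number of steps. Intersecting the two statements yields that every $push$ request attaches its node to the stack exactly once and in bounded time, which is the first assertion.

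The more delicate part is the structural claim about the $next$ pointers, which I would prove as an invariant maintained across the only events that can alter the chain, namely the successful CAS operations at Line~\ref{line:pushcas} (and the paired reset to $(null,\mathbf{true})$ performed after $updateTop$). The invariant I would carry is: at every instant the nodes reachable from the $sentinel$ via $nextDone$ form a simple, acyclic, duplicate-free path whose final element is the node currently designated by $top$. The base case is the initial configuration, in which the path is the lone $sentinel$. For the inductive step, a successful CAS at Line~\ref{line:pushcas} can fire only when $last.nextDone$ reads $(null,\mathbf{false})$, i.e.\ only when $last$ is the genuine tail (the current $top$), and it appends a node that was freshly allocated inside its owning request at the $push$ entry point (Line~\ref{line:push}). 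Hence the appended node is distinct from every node already present, so the path stays simple and acyclic, and the subsequent $updateTop$ (Line~\ref{line:updatetop}) advances $top$ to this new tail, re-establishing the invariant.

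To rule out duplicates across \emph{different} requests as well, I would note that two distinct $push$ calls can never attach the same $Node$ object, since each call allocates a fresh $Node$ at its invocation, while Lemma~\ref{lemm:atmost} forbids a single request from inserting twice; together these guarantee that no node ever appears more than once in the chain. Lock-freedom then comes for free: Lemma~\ref{lemm:atleast} already guarantees that each $push$ terminates in a bounded number of steps irrespective of the scheduling of other threads, which is the stronger property of (bounded) wait-freedom and \emph{a fortiori} implies lock-freedom.

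The main obstacle I anticipate is pinning down the structural invariant precisely in the presence of the helping mechanism and the transient resetting of $last.nextDone$ to $(null,\mathbf{true})$ after $updateTop$. One must argue that this reset never re-introduces a node into the chain nor violates the ``tail equals $top$'' invariant, and that a concurrent retry---which aborts upon observing $done=\mathbf{true}$ or $mark=\mathbf{true}$ at Line~\ref{line:nextnull} and Line~\ref{line:nextnonnull}---can neither create a branch nor overwrite a live edge. Since the serialization of successful insertions is already provided by Lemma~\ref{lemm:atmost}, I expect the remaining effort here to be careful case bookkeeping over the interleavings rather than a genuinely new idea.
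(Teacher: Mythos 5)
Your handling of the first and third claims is exactly the paper's: Lemma~\ref{lemm:atmost} together with Lemma~\ref{lemm:atleast} gives ``exactly once, in a bounded number of steps,'' and bounded completion immediately yields lock-freedom (indeed bounded wait-freedom). The structural part, however, has a genuine flaw: the invariant you propose to carry by induction --- that at every instant the nodes reachable from the $sentinel$ via $nextDone$ form a simple path whose final element is the node designated by $top$ --- is false in this algorithm, and not because of a corner case that careful bookkeeping can absorb. After a successful CAS at Line~\ref{line:pushcas} and the call to $updateTop$, the winning thread deliberately resets $last.nextDone$ to $(null,\mathbf{true})$; the paper does this precisely so that popped-and-deleted nodes are not kept alive by stale $next$ references. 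Consequently, once the very first push completes, $sentinel.nextDone$ holds a $null$ pointer, the set of nodes reachable from the sentinel via $nextDone$ collapses to $\{sentinel\}$, and yet $top$ points to the newly pushed node. Your invariant is violated immediately and permanently: the $nextDone$ fields never form a persistent chain (that role is played by the $prev$ pointers, which your argument does not use). You do flag this reset as the ``main obstacle,'' but the repair you sketch --- showing the reset ``never violates the tail-equals-top invariant'' --- cannot succeed, because the reset does violate that invariant as you have stated it.

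The fix is to change what object the induction speaks about, which is what the paper's proof does: argue about the \emph{history} of assignments to the $next$ fields rather than about instantaneous reachability. Each node's $nextDone$ field moves monotonically through $(null,\mathbf{false}) \rightarrow (myNode,\mathbf{false}) \rightarrow (null,\mathbf{true})$, and no CAS can succeed once the done bit is set; hence every node ever acquires at most one successor, and that successor is a node pushed after it, because steps 2--5 identified in the proof of Lemma~\ref{lemm:atmost} execute in sequence. Combined with Lemma~\ref{lemm:atmost} (no request's node is attached as a successor twice), the freshness of allocation (which you correctly invoke), and the observation that $pop$ never touches $next$ pointers, the attachment relation is a simple, duplicate-free chain --- this is the sense in which ``the stack, considered via the $next$ pointers, is always a linked list.'' Your observation that a successful CAS at Line~\ref{line:pushcas} can occur only when $last$ is still the genuine tail is correct (it follows from the monotonicity of the $nextDone$ state machine) and is the right ingredient; it just has to be packaged as a statement about attachment order, not about what is reachable from the sentinel at a given instant.
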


\begin{proof}
Lemma~\ref{lemm:atmost} and Lemma~\ref{lemm:atleast} prove that an entry is added
only once (in a bounded number of steps).
Secondly, we are allowed to modify the $next$ pointer of a node only once.
It cannot only point to another node. Each node points to another node that is pushed
to the stack after it because steps 2-5 are executed in sequence. Thus the stack
at all times has a structure similar to a linked list. The end of the linked list
is the stack top. The $pop$ method does not touch the $next$ pointer; hence, this
property is maintained.
\end{proof}

\begin{lemma}
Every $pop$ operation pops just one element or returns an EmptyStackException.
\label{lemm:poponce}
\end{lemma}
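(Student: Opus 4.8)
The plan is to read off the two possible exits of the \texttt{while} loop in the $pop$ method (Algorithm~\ref{alg:pop}) and to rely on the atomicity of \texttt{getAndSet}. First I would observe that the loop beginning at Line~\ref{line:popwhile} can terminate in only two ways: either the \texttt{break} fires because the \texttt{getAndSet}$(true)$ applied to $curr.mark$ returned \textbf{false} on the current node, or the guard $curr \ne sentinel$ fails, i.e.\ $curr$ has walked back to the $sentinel$ along the $prev$ chain. These two outcomes are mutually exclusive and exhaustive, so every $pop$ lands in exactly one of them, and the walk is finite because each iteration follows a $prev$ pointer toward the $sentinel$.

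In the sentinel case, control reaches the test $curr == sentinel$, the \emph{EmptyStackException} is thrown, and no node is returned; this is the second alternative of the lemma. In the \texttt{break} case, $curr$ is the first node on the traversed $prev$ chain whose $mark$ field was \textbf{false} when this operation applied \texttt{getAndSet} to it. Since the \texttt{break} fires the instant $!mark$ first holds, the loop neither revisits nor proceeds past a node, so the method returns exactly one node. Any node met earlier in the walk was already marked, so the \texttt{getAndSet} there returns \textbf{true} and acts as an idempotent no-op that pops nothing.

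The step I expect to be the crux is ruling out that two distinct $pop$ operations claim the same node. For this I would invoke the atomicity of \texttt{getAndSet} directly: for any fixed node, its $mark$ field passes from \textbf{false} to \textbf{true} through a single \texttt{getAndSet} that returns \textbf{false} to exactly one calling thread, while every subsequent \texttt{getAndSet} on that node returns \textbf{true}. Hence at most one $pop$ can observe $!mark$ to be true on a given node and \texttt{break} there, so no node is returned by two operations. Combining the two cases, each $pop$ either returns exactly one element --- the node it alone transitioned from unmarked to marked --- or throws \emph{EmptyStackException}, as claimed.
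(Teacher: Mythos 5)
Your proposal is correct and follows essentially the same route as the paper: direct inspection of the $pop$ code, noting that the loop terminates either at the first node whose $mark$ the operation atomically flips from \textbf{false} to \textbf{true} (returning exactly that one node) or at the $sentinel$ (throwing the exception). The paper's proof is terser and emphasizes that $pop$ has no helping mechanism --- so no other thread can mark a node on behalf of this operation --- a point your code-level reasoning captures implicitly, while you additionally spell out via the atomicity of \texttt{getAndSet} that no two $pop$ operations can claim the same node.
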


\begin{proof}
We start at the stack top ($mytop \leftarrow top.get()$), and proceed towards the
bottom of the stack. If we get any unmarked node, then we mark it. After marking a node
the $pop$ operation is over. Note that there is no helping in the case of a $pop$ 
operation. Hence, other threads do not work on behalf of a thread. As a result only
one node is marked (or popped). If we are not able to mark a node then we return
an EmtpyStackException. 
\end{proof}

\section{The $clean$ Method}


Let us consider the $clean$ method first. It is called by the $tryCleanUp$ method in Line~\ref{line:callclean}.
The aim of the $clean$ method is to clean a set of $W$ contiguous entries in the list (indexed by 
the $prev$ pointers). Let us start out by defining some terminology. Let us define a range of $W$ 
contiguous entries, which has
four distinguished nodes (see Figure~\ref{fig:rangewa}).

\begin{figure}[!htb]
\begin{center}
\includegraphics[width=1.0\columnwidth]{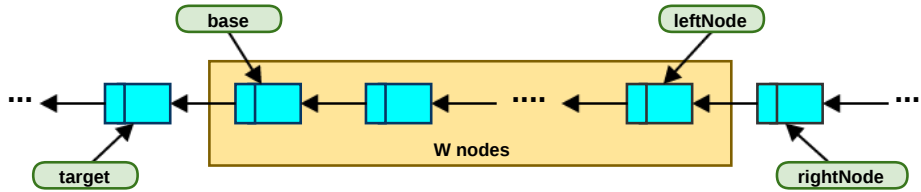}
\caption{A range of $W$ entries \label{fig:rangewa} }
\end{center}
\end{figure}

A range starts with a node termed the $base$, whose index is a multiple of $W$. Let us now define
$target$ as $base.prev$. The node at the end of a range is $leftNode$. Its index is equal to 
$base.index + W - 1$. Let us now define a node $rightNode$ such that $rightNode.prev = leftNode$. 
Note that for a given range, the $base$ and $leftNode$ nodes are fixed, whereas the $target$ and
$rightNode$ nodes keep changing. $rightNode$ is the base of another range, and its index is
a multiple of $W$. 

The $push$ and the $pop$ methods call the function $tryCleanUp$. The $push$ method calls it when
it pushes a node whose index is a multiple of $W$. This is a valid $rightNode$. It walks back 
and increments the counter of the $base$ node of the previous range. We ensure that only one thread
(out of all the helpers) does this in Line~\ref{line:onethread}. Similarly, in the $pop$ function,
whenever we mark a node, we call the $tryCleanUp$ function. Since the $pop$ function does not
have any helpers, only one thread per node calls the $tryCleanUp$ function. Now, inside
the $tryCleanUp$ function, we increment the counter of the $base$ node. Once, a thread increments it to
 $W+1$,
it invokes the $clean$ function. Since only one thread will increment the counter to $W+1$, only one thread
will invoke the $clean$ function for a range.

\begin{algorithm}
\small
\SetAlgoLined
\textbf{class DeleteRequest}{}\\
		\hspace{5mm}$long$ $phase$\\
		\hspace{5mm}$int$ $threadId$\\
		\hspace{5mm}$AtomicBoolean$ $pending$\\
        \hspace{5mm}$Node$ $node$ \\	

\vskip 2mm
AtomicReferenceArray$<$DeleteRequest$>$ allDeleteRequests

\caption{$DeleteRequest$ } \label{alg:deleteclasses}
\end{algorithm}

\begin{algorithm}
\small
\SetAlgoLined
\textbf{clean}($tid$, $node$){}\\
$phase$ $\leftarrow$ $deletePhase.getAndIncrement$()\\
$request$ $\leftarrow$  $new$ DeleteRequest($phase$, $tid$, $true$, $node$) \\
$allDeleteRequests$[$tid$] $\leftarrow$  $request$ \label{line:alldel}\\
$helpDelete$($request$)\\

\textbf{helpDelete}($request$){}\\
			($minTid$, $minReq$)  $\leftarrow$ $min_{req.phase}$ \{ $i$, $req$ $\mid$ $0 \le i < N$, $req$ = $allDeleteRequests[i]$,
$req.pending$ = \textbf{true} \} \\
			\If{($minReq == null$) $\mid\mid$ ($minReq.phase$ $>$ $request.phase$)}{
				break
			}
			$uniqueDelete$($minReq$) \\
			\If{$minReq \ne request$} {
				$uniqueDelete(request)$
			}
	
\textbf{uniqueDelete}($request$){}\\
		\While{$request.pending$}
		{
			$currRequest$ $\leftarrow$ $uniqueRequest.get$()\\
			\If{!$currRequest.pending$ \label{line:deluniquepending} }
			{
				\If{$request.pending$}
				{
					$stat$ $\leftarrow$ ($request \ne currRequest$) ? $uniqueRequest$.$compareAndSet$ ($currRequest$,
$request$) :
\textbf{true} \label{line:cleanCAS}\\
					$helpFinishDelete$()\\
					\If{$stat$} {
						$return$
					}
				}
			}
		\Else{
				$helpFinishDelete$()
			}
		}
\caption{$clean$, $helpDelete$, and $uniqueDelete$ methods} \label{alg:clean}
\end{algorithm}
	
\normalsize
Let us now consider the $clean$, $helpDelete$, and $uniqueDelete$ functions. Their functionality at a high level
is very similar to the $push$ and $help$ methods.  Here, we first create a $DeleteRequest$
that has four fields: $phase$ (similar to phase in $PushOp$), $threadId$, $pending$ (whether the delete
has been finished or not), and  the value of the $base$ node. Akin to the $push$ function,
we add the newly created $DeleteRequest$ to a global array of $DeleteRequest$s in Line~\ref{line:alldel}.
Subsequently, we call the $helpDelete$ function. This function finds a pending request with the minimum phase
in the array $allDeleteRequests$, and returns the request as $minReq$. Subsequently, we invoke $uniqueDelete$.

Note that at this stage it is possible for multiple threads to read the same value of the request with the 
minimum phase number. It is also possible for different sets of threads to have found different requests
to have the minimum phase. For example, if a request with phase 2 ($R_2$) got added to the array before the request with
phase 1 ($R_1$), then a set of threads might be trying to perform $uniqueDelete$ on $R_1$, and another set might
be trying to perform $uniqueDelete$ on $R_1$. Our aim in the $uniqueDelete$ function is to ensure that only
one set goes through to the next stage. It takes two arguments: $req$ (request) and $phase$ (phase number).

We adopt a strategy similar to the one adopted in the function $attachNode$. We define a global atomic
variable, $uniqueRequest$. If a delete is not pending (Line~\ref{line:deluniquepending}) 
on $uniqueRequest$, we read its contents, and try to perform a CAS operation on it. We try to atomically
replace its current contents with the argument, $req$. Note that at this stage, only one set of threads
will be successful. Beyond this point, all the threads will be working on the same DeleteRequest. 
They will then move on to call the $helpFinishDelete$ function that will finish the delete request. 
For threads that are not successful in the CAS operation, or threads that find that the current request
contained in $uniqueRequest$ has a delete pending will also call the $helpFinishDelete$ function. This 
is required to ensure wait freedom.

\begin{algorithm}
\small
\SetAlgoLined
\textbf{helpFinishDelete}(){}\\
		$currRequest$ $\leftarrow$ $uniqueRequest.get$() \label{line:currreada} \\
		\If{$!currRequest.pending$}
		{
			$return$
		}
\vskip 2mm
			$endIdx$ $\leftarrow$ $currRequest.node.index + W - 1$	 \label{line:endidxa}\\

			/* Search for the request from the $top$ */ \\
			$rightNode$ $\leftarrow$ $top.get()$ \\
		    $leftNode$ $\leftarrow$ $rightNode.prev$ \\	
			\While{$leftNode.index \ne endIdx$ $\&\&$ $leftNode\ne sentinel$} {
				$rightNode$ $\leftarrow$ $leftNode$ \label{line:righta} \\
				$leftNode$ $\leftarrow$ $leftNode.prev$ \label{line:lefta}
			}
			\If{$leftNode = sentinel$} {
				$return$ /* some other thread deleted the nodes */
			} \label{line:sentinela}
		
\vskip 2mm
		/* Find the target node */ \\
		$target$ $\leftarrow$ $leftNode$  \label{line:targetstarta}\\
		\For{i=0; i $< W$; i++}{
			$target$ $\leftarrow$ $target.prev$	
		} \label{line:targetenda}

\vskip 2mm
		/* Perform the CAS operation and delete the nodes */\\
		$rightNode.prev.compareAndSet(leftNode, target)$ \label{line:rightcasa}

\vskip 2mm
		/* Set the status of the delete request to not pending*/ \\
		$currRequest.pending$ $\leftarrow$ \textbf{false} \label{line:setpendinga}

\caption{The $helpFinishDelete$ method} \label{alg:helpfinishdeletea}
\end{algorithm}
\normalsize
Lastly, let us describe the $helpFinishDelete$ function in Algorithm~\ref{alg:helpfinishdeletea}. 
We first read the current request from the atomic variable, $uniqueRequest$ in Line~\ref{line:currreada}. If
the request is not pending, then some other helper has completed the request, and we can return from the function.
However, if this is not the case, then we need to complete the delete operation. Our aim now is to find the $target$,
$leftNode$, and $rightNode$. We search for these nodes starting from the stack top.

The index of the $leftNode$ is equal to the index of the node in the current request ($currRequest$) + $W-1$. 
$endIdx$ is set to this value in Line~\ref{line:endidxa}.
Subsequently, in Lines~\ref{line:righta}--\ref{line:sentinela}, we start from the top of the stack, and keep traversing the $prev$
pointers till the index of $leftNode$ is equal to $endIdx$. Once, the equality condition is satisfied
Lines~\ref{line:righta} and \ref{line:lefta} give us the pointers to the $rightNode$ and $leftNode$ respectively. If
we are not able to find the $leftNode$, then it means that another helper has successfully deleted the nodes. We can
thus return. 

The next task is to find the $target$. The $target$ is $W$ hops away from the $leftNode$.
Lines~\ref{line:targetstarta}--\ref{line:targetenda} run a loop $W$ times to find the target. Note that we shall never
have any issues with null pointers because $sentinel.prev$ is set to $sentinel$ itself. Once, we have found the
target, we need to perform a CAS operation on the $prev$ pointer of the $rightNode$. We accomplish this in
Line~\ref{line:rightcasa}. If the $prev$ pointer of $rightNode$ is equal to $leftNode$, then we set it to $target$.
This operation removes $W$ entries (from $leftNode$ to $base$) from the list. The last step is to set the 
status of the $pending$ field in the current request ($currRequest$) to false (see Line~\ref{line:setpendinga}).

\end{appendix}

\end{document}